\documentclass[11pt]{article}

\usepackage[T1]{fontenc}
\usepackage[utf8]{inputenc}
\usepackage{lmodern}
\usepackage{amsmath,amssymb,amsthm,amsfonts,mathtools}
\usepackage{bm}
\usepackage{bbm}
\usepackage{mathrsfs}
\usepackage{enumitem}
\usepackage[margin=1in]{geometry}
\usepackage{xcolor}
\usepackage{hyperref}
\usepackage{url}

\hypersetup{
  colorlinks=true,
  linkcolor=blue!60!black,
  citecolor=blue!60!black,
  urlcolor=blue!60!black
}

\newtheorem{definition}{Definition}
\newtheorem{lemma}{Lemma}

\newtheorem{theorem}{Theorem}
\newtheorem{corollary}{Corollary}
\theoremstyle{remark}

\title{Diminishing Returns in Expanding Generative Models and G\"odel--Tarski--L\"ob Limits}
\author{Angshul Majumdar, Indraprastha Institute of Information Technology, Delhi}
\date{}

\begin{document}

\maketitle

\begin{abstract}
Modern generative modelling systems are increasingly improved by expanding model capacity, training data, and computational resources. While empirical studies have documented such scaling behaviour across architectures including generative adversarial networks, variational autoencoders, transformer-based models, and diffusion models, the theoretical limits of capability growth in expanding generative systems remain poorly understood.

In this paper we develop a general task-space framework for analysing expanding generative reasoning systems. Each system induces a subset of a global task space representing the tasks it can successfully solve, and system capability is measured by the probability mass of this solved-task set under a fixed task distribution. Within this framework we prove a structural result showing that, under mild assumptions, the marginal improvement in solved tasks must converge to zero as system capacity increases. Thus expanding generative systems may continue to gain capability, but the probability mass of newly solvable tasks necessarily diminishes asymptotically.

We further provide a prediction-theoretic refinement based on complexity-weighted hypothesis classes inspired by algorithmic probability, yielding quantitative bounds on marginal improvement in prediction settings. Finally, we examine logical reasoning tasks and show that classical results from mathematical logic---including Rosser incompleteness, Tarski's undefinability theorem, and L\"ob's theorem---imply the persistence of unresolved logical tasks within sufficiently expressive reasoning systems.

Together these results provide a mathematical perspective on the asymptotic behaviour of expanding generative systems, showing that long-run capability growth is constrained both by diminishing marginal improvements in task coverage and by fundamental logical limitations on internal reasoning.
\end{abstract}

\noindent\textbf{Keywords:} Generative models; Task-space analysis; Diminishing returns; Algorithmic probability; Kolmogorov complexity; G\"odel--Tarski--L\"ob theorems; Formal systems; Theoretical artificial intelligence

\section{Introduction}
\label{sec:intro}

Generative modelling studies how to represent, predict, and sample from complex data-generating processes. Over the past decade, a broad spectrum of architectures has been developed for this purpose, including generative adversarial networks \cite{Goodfellow2014}, variational autoencoders \cite{Kingma2014}, transformer-based autoregressive models \cite{Vaswani2017}, and diffusion-based generative models \cite{Ho2020,Rombach2022}. Although these approaches differ substantially in architecture, training objective, and output modality, they share a common empirical pattern: increasing model capacity, data, and computational resources often improves performance on generation and prediction tasks.

This observation has led to intense interest in the scaling behaviour of modern learning systems. Empirical studies have reported regular relationships between performance, model size, data, and compute \cite{Kaplan2020,Hoffmann2022}. However, most existing analyses are architecture-specific, empirical, or tied to particular loss functions. Much less is understood about the following more general question: if one considers an abstract family of \emph{expanding generative reasoning systems}, what mathematical law governs the long-run growth of the tasks such systems can solve?

This paper addresses that question through a task-space formulation. Instead of analysing a particular neural architecture, we consider a sequence of generative reasoning systems indexed by capacity. Each system induces a subset of a global task space consisting of the tasks it can solve successfully. A probability measure on the task space quantifies the prevalence of tasks in the environment, and the capability of a system is measured by the probability mass of its solved-task set. This leads to a natural notion of utility growth under expansion.

Within this framework we study the asymptotic behaviour of the utility sequence associated with an expanding family of systems. The central structural question is whether continued expansion can sustain non-vanishing marginal gains in solved tasks. Our first main result shows that, under mild assumptions, the answer is negative: if solved-task sets expand monotonically within a fixed task distribution, then the marginal improvement in utility must converge to zero. Thus expanding generative systems may continue to gain capability, but the probability mass of newly solvable tasks necessarily diminishes asymptotically.

The paper also develops a prediction-theoretic refinement of this result. For prediction tasks, we introduce a complexity-weighted hypothesis class inspired by Solomonoff induction and algorithmic information theory \cite{Solomonoff1964a,Solomonoff1964b,LiVitanyi2008,Hutter2005}. This yields an explicit quantitative bound on successive utility increments in terms of the tail mass of a complexity-truncated prior. In this sense, the general structural theorem is supplemented by a rate-type estimate in prediction settings.

A second theme of the paper concerns logical reasoning tasks internal to the system. When the reasoning component of a generative system is represented by a sufficiently expressive formal theory, classical incompleteness phenomena become unavoidable. Using Rosser's strengthening of G\"odel's first incompleteness theorem \cite{Rosser1936}, together with Tarski's undefinability theorem \cite{Tarski1936} and L\"ob's theorem on provability and reflection \cite{Lob1955}, we show that certain logical tasks remain unresolved within every sufficiently expressive system. These results imply that capability expansion is constrained not only by diminishing marginal improvement in utility, but also by persistent logical frontiers that cannot be eliminated from within the system itself.

The contribution of the paper is therefore twofold. First, it provides a general task-space theorem showing that expanding generative reasoning systems exhibit diminishing marginal improvement under mild structural assumptions. Second, it shows that when such systems perform internal formal reasoning, classical logical obstructions imply the persistence of unsolved reasoning tasks. Together these results provide a mathematically explicit account of why capability growth in expanding generative systems can continue while nevertheless facing both asymptotic diminishing returns and permanent logical limitations.

The remainder of the paper is organized as follows. Section~\ref{sec:model} introduces the formal model of expanding generative reasoning systems. Section~\ref{sec:prediction} develops the prediction-theoretic refinement based on complexity-weighted hypothesis classes. Section~\ref{sec:utility-dynamics} studies the dynamics of utility growth, and Section~\ref{sec:structure} establishes the structural properties needed for the main asymptotic theorem. Section~\ref{sec:diminishing} proves that marginal utility gains vanish asymptotically. Section~\ref{sec:logical_limits} then analyses logical limits on internal reasoning tasks. Section~\ref{sec:conclusion} concludes with a discussion of the broader implications of these results for the theory of generative systems.

\section{Formal Model of Expanding Generative Reasoning Systems}
\label{sec:model}

Recent progress in generative modelling has been driven by increasingly expressive model classes and larger computational resources. Architectures such as generative adversarial networks, variational autoencoders, transformer-based autoregressive models, and diffusion models illustrate this trend \cite{Goodfellow2014,Kingma2014,Vaswani2017,Ho2020}. Despite architectural differences, these systems share a common structure: a parameterized generative model, an internal inference or reasoning mechanism, and a procedure for producing outputs. In this section we introduce a formal abstraction that captures these common elements.

\subsection{Task Space}

Let $\Omega$ denote a measurable task space equipped with a probability measure $\mu$.

Each element $\omega \in \Omega$ represents a computational or predictive task posed to the system. Examples include predicting missing values in observed data, generating samples consistent with a target distribution, answering structured reasoning queries, and producing outputs satisfying specified constraints.

\textbf{Assumption 1 (Fixed task distribution).}
The task space $\Omega$ and probability measure $\mu$ are fixed and independent of the system index $n$.

\subsection{Generative Model Classes}

Let $X$ denote a measurable data domain and let $P$ denote the true but unknown data-generating distribution over $X$.

\begin{definition}[Generative model class]
A generative model class is a family of probability distributions
\[
\mathcal{P}_\Theta = \{p_\theta(x): \theta \in \Theta\},
\]
parameterized by a parameter space $\Theta$.
\end{definition}

Representative examples include adversarial generators \cite{Goodfellow2014}, latent-variable models \cite{Kingma2014}, autoregressive transformer architectures \cite{Vaswani2017}, and diffusion-based generative models \cite{Ho2020}.

\subsection{Generative Reasoning Systems}

\begin{definition}[Generative reasoning system]
A generative reasoning system is a tuple
\[
S=(\mathcal{P},T,I),
\]
where $\mathcal{P}$ is a generative model class, $T$ is an internal reasoning theory or inference framework, and $I$ is an inference procedure that produces outputs for tasks in $\Omega$.
\end{definition}

The inference procedure may involve probabilistic sampling, sequential generation, or symbolic reasoning over intermediate representations.

\subsection{Expanding System Families}

\begin{definition}[Expanding generative system family]
A sequence of generative reasoning systems
\[
S_1,S_2,\ldots
\]
is called expanding if
\[
\mathcal{P}_1 \subseteq \mathcal{P}_2 \subseteq \cdots
\qquad\text{and}\qquad
T_1 \subseteq T_2 \subseteq \cdots .
\]
\end{definition}

Increasing the index $n$ corresponds abstractly to enlarging model capacity, training data, algorithmic sophistication, or computational resources.

\subsection{Solved-Task Sets}

\begin{definition}[Solved-task set]
The solved-task set of system $S_n$ is
\[
\mathcal{A}_n
=
\{\omega \in \Omega : S_n \text{ produces a valid solution for } \omega\}.
\]
\end{definition}

The precise meaning of ``valid solution'' depends on the task type but may correspond, for example, to prediction error below a specified threshold, generation of samples from the correct distribution, or production of a logically valid answer.

\subsection{System Utility and Marginal Improvement}

\begin{definition}[Utility]
The utility of system $S_n$ is defined as
\[
U(n)=\mu(\mathcal{A}_n).
\]
\end{definition}

Thus $U(n)$ represents the probability that a randomly drawn task from $\Omega$ can be solved by system $S_n$.

The marginal improvement obtained by increasing system capacity from level $n$ to level $n+1$ is
\[
\Delta(n)=U(n+1)-U(n).
\]

Equivalently, if
\[
\mathcal{N}_{n+1}
=
\mathcal{A}_{n+1}\setminus \mathcal{A}_n
\]
denotes the novelty set of newly solvable tasks, then
\[
\Delta(n)=\mu(\mathcal{N}_{n+1}).
\]

\section{A Rigorous Prediction-Theoretic Refinement}
\label{sec:prediction}

The measure-theoretic theorem proved later does not require any special structure on the task space beyond monotonicity of the solved-task sets. However, for prediction tasks one can obtain a quantitative refinement using a complexity-weighted hypothesis class. This section provides that refinement in a form fully consistent with the task-space framework of Section~\ref{sec:model}.

\subsection{Prediction Tasks}

We consider the subclass of tasks in which a system observes a context $c \in \mathcal{C}$ and must predict an outcome $y \in \mathcal{Y}$, where $\mathcal{Y}$ is finite. A prediction task is therefore a pair
\[
\omega=(c,y)\in\Omega_{\mathrm{pred}}:=\mathcal{C}\times\mathcal{Y}.
\]
We assume that $\Omega_{\mathrm{pred}}\subseteq\Omega$.

Let $\pi$ be a probability measure on $\mathcal{C}$ describing the distribution of contexts. For each context $c$, the outcome is generated according to a conditional distribution on $\mathcal{Y}$.

\subsection{Hypothesis Class and Prior}

Let $\mathcal{H}$ be a countable class of hypotheses. Each $h\in\mathcal{H}$ specifies a conditional distribution
\[
q_h(\,\cdot \mid c)
\]
on $\mathcal{Y}$ for every $c\in\mathcal{C}$.

Let $K(h)$ denote the Kolmogorov complexity of a description of $h$ with respect to a fixed universal description language \cite{LiVitanyi2008}. Define a prior
\[
w_h=\frac{2^{-K(h)}}{Z},
\qquad
Z:=\sum_{g\in\mathcal{H}}2^{-K(g)}.
\]
By Kraft's inequality, $Z\le 1$ for any prefix-free coding scheme, so the prior is well defined after normalization \cite{LiVitanyi2008}. Define the complexity-truncated class
\[
\mathcal{H}_n:=\{h\in\mathcal{H}:K(h)\le n\},
\]
and its prior mass
\[
Z_n:=\sum_{h\in\mathcal{H}_n} w_h,
\qquad
\tau_n:=1-Z_n=\sum_{K(h)>n} w_h.
\]
Since $\sum_{h\in\mathcal{H}} w_h=1$, we have $\tau_n\downarrow 0$ as $n\to\infty$.

\subsection{Full and Truncated Predictive Mixtures}

The full predictive mixture is
\[
q(\,\cdot \mid c):=\sum_{h\in\mathcal{H}} w_h\,q_h(\,\cdot \mid c).
\]

The normalized truncated predictive mixture is
\[
q_n(\,\cdot \mid c)
:=
\sum_{h\in\mathcal{H}_n} \frac{w_h}{Z_n}\,q_h(\,\cdot \mid c),
\qquad Z_n>0.
\]

Define also the normalized tail mixture
\[
r_n(\,\cdot \mid c)
:=
\sum_{K(h)>n}\frac{w_h}{\tau_n}\,q_h(\,\cdot \mid c),
\qquad \tau_n>0.
\]

Then, for every context $c$,
\[
q(\,\cdot \mid c)=Z_n q_n(\,\cdot \mid c)+\tau_n r_n(\,\cdot \mid c).
\]

\subsection{Decision-Theoretic Risk}

Let $\mathcal{U}$ denote an action space. Let
\[
\ell:\mathcal{U}\times\mathcal{Y}\to[0,1]
\]
be a bounded measurable loss function.

For a predictive distribution $\rho(\,\cdot \mid c)$, define the Bayes risk at context $c$ by
\[
V(\rho,c):=\inf_{u\in\mathcal{U}}\sum_{y\in\mathcal{Y}}\ell(u,y)\rho(y\mid c).
\]

The corresponding context-averaged risk is
\[
R(\rho):=\int_{\mathcal{C}} V(\rho,c)\,\pi(dc).
\]

We use the total variation distance in its dual form:
\[
\|\rho-\rho'\|_{\mathrm{TV}}
:=
\sup_{|f|\le 1}
\left|
\sum_{y\in\mathcal{Y}} f(y)\rho(y)
-
\sum_{y\in\mathcal{Y}} f(y)\rho'(y)
\right|.
\]
On finite spaces this is equivalent to the usual half-$\ell_1$ definition.

\subsection{Stability of Bayes Risk Under Prior Truncation}

\begin{lemma}[Uniform perturbation bound]
\label{lem:tv-bound}
For every context $c\in\mathcal{C}$,
\[
\|q(\,\cdot \mid c)-q_n(\,\cdot \mid c)\|_{\mathrm{TV}}\le \tau_n.
\]
\end{lemma}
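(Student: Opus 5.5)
The plan is to derive the bound directly from the mixture decomposition established just before the lemma,
\[
q(\,\cdot \mid c)=Z_n q_n(\,\cdot \mid c)+\tau_n r_n(\,\cdot \mid c),
\]
together with $Z_n=1-\tau_n$. Subtracting $q_n$ from both sides gives the key identity
\[
q(\,\cdot \mid c)-q_n(\,\cdot \mid c)=\tau_n\bigl(r_n(\,\cdot \mid c)-q_n(\,\cdot \mid c)\bigr).
\]
So the discrepancy between the full and truncated mixtures equals the tail mass $\tau_n$ times the difference of two probability distributions on $\mathcal{Y}$. Once this is in place, the lemma reduces to bounding the distance between two probability vectors.

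First I will dispose of the degenerate cases. If $\tau_n=0$, then $r_n$ is undefined, but $Z_n=1$ and the truncated mixture coincides with the full one, so the distance is $0$. The standing assumption $Z_n>0$ makes $q_n$ well defined. For $\tau_n>0$, I will check that $r_n(\,\cdot \mid c)$ is a genuine probability distribution on $\mathcal{Y}$. This holds because it is a convex combination, with weights $w_h/\tau_n$ summing to $1$, of the distributions $q_h(\,\cdot \mid c)$. Since $\mathcal{Y}$ is finite and the weights are summable, the countable sums converge absolutely, so rearrangement is harmless.

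Next I will apply the total variation distance to the identity. Homogeneity gives $\|q-q_n\|_{\mathrm{TV}}=\tau_n\|r_n-q_n\|_{\mathrm{TV}}$, and it remains to show $\|r_n-q_n\|_{\mathrm{TV}}\le 1$ for any two probability vectors.

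This last step is the main obstacle, and it is a matter of normalization rather than real difficulty. In the half-$\ell_1$ convention, which the paper declares equivalent to its dual form, the bound is immediate:
\[
\tfrac12\sum_y|r_n(y)-q_n(y)|\le\tfrac12\Bigl(\sum_y r_n(y)+\sum_y q_n(y)\Bigr)=1.
\]
However, the dual formula as written, with the supremum over all $|f|\le 1$, actually equals the full $\ell_1$ norm. Under that literal reading, the same argument only yields $\tau_n\sup_{|f|\le1}|\mathbb{E}_{r_n}f-\mathbb{E}_{q_n}f|\le \tau_n(\sup f-\inf f)\le 2\tau_n$.

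I will therefore state the proof in the half-$\ell_1$ convention, equivalently the dual form with $f$ ranging over $[0,1]$-valued test functions, and obtain $\tau_n$. I will also remark that under the unnormalized dual convention the constant becomes $2$. This factor only rescales the later Bayes-risk stability bounds and does not affect $\tau_n\downarrow0$. The bound is uniform in $c$ because nothing in the argument depends on the context.
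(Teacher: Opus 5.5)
Your proof is correct and follows the paper's argument: subtract $q_n$ from the decomposition $q=Z_nq_n+\tau_nr_n$, factor out $\tau_n$, and bound the distance between the two probability distributions $r_n$ and $q_n$ by $1$. Your extra remarks are also sound and go beyond the paper's proof. The paper silently assumes $\tau_n>0$, and it uses the half-$\ell_1$ bound even though its stated dual form with $|f|\le 1$ is the full $\ell_1$ norm, under which the bound $\tau_n$ can fail and only $2\tau_n$ is guaranteed. Your choice of $[0,1]$-valued test functions is therefore the convention under which both this lemma and the subsequent risk bound hold with constant $1$.
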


\begin{proof}
Using
\[
q(\,\cdot \mid c)=Z_n q_n(\,\cdot \mid c)+\tau_n r_n(\,\cdot \mid c),
\]
we obtain
\[
q(\,\cdot \mid c)-q_n(\,\cdot \mid c)
=
\tau_n\bigl(r_n(\,\cdot \mid c)-q_n(\,\cdot \mid c)\bigr).
\]
Hence
\[
\|q(\,\cdot \mid c)-q_n(\,\cdot \mid c)\|_{\mathrm{TV}}
=
\tau_n\|r_n(\,\cdot \mid c)-q_n(\,\cdot \mid c)\|_{\mathrm{TV}}
\le \tau_n,
\]
because the total variation distance between two probability distributions is at most $1$.
\end{proof}

\begin{lemma}[Risk perturbation bound]
\label{lem:risk-bound}
For every predictive distributions $\rho$ and $\rho'$ on $\mathcal{Y}$ and every context $c$,
\[
|V(\rho,c)-V(\rho',c)|\le \|\rho-\rho'\|_{\mathrm{TV}}.
\]
Consequently,
\[
|R(q)-R(q_n)|\le \tau_n.
\]
\end{lemma}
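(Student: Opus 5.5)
The plan is to prove the pointwise inequality for a single fixed action $u$, pass to the infimum over $u$, and then integrate over contexts using Lemma~\ref{lem:tv-bound}.

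\textbf{Step 1 (fixed action).} Fix $c$ and an action $u\in\mathcal{U}$, and write $L_\rho(u):=\sum_{y}\ell(u,y)\rho(y)$. I want $|L_\rho(u)-L_{\rho'}(u)|\le\|\rho-\rho'\|_{\mathrm{TV}}$ uniformly in $u$. The one point needing care is the normalization of the dual-form distance. Taken literally, $\sup_{|f|\le1}$ gives the full $\ell_1$ norm, while the paper identifies it with the half-$\ell_1$ distance. To be safe under either reading, I will prove the stronger bound by the half-$\ell_1$ quantity. Since $\sum_y(\rho(y)-\rho'(y))=0$, I can center the loss:
\[
L_\rho(u)-L_{\rho'}(u)=\sum_{y}\bigl(\ell(u,y)-\tfrac12\bigr)\bigl(\rho(y)-\rho'(y)\bigr).
\]
Because $\ell\in[0,1]$, the centered function satisfies $|\ell(u,\cdot)-\tfrac12|\le\tfrac12$. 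Hence the difference is at most $\tfrac12\sum_y|\rho(y)-\rho'(y)|$. This is the half-$\ell_1$ distance, which is $\le\|\rho-\rho'\|_{\mathrm{TV}}$ under either convention.

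\textbf{Step 2 (infimum).} I use the elementary fact that if $|a_u-b_u|\le\varepsilon$ for all $u$, then $|\inf_u a_u-\inf_u b_u|\le\varepsilon$. Indeed, for every $u$ we have $\inf_v a_v\le a_u\le b_u+\varepsilon$, and taking the infimum over $u$ gives $\inf a\le\inf b+\varepsilon$; the reverse inequality follows by symmetry. Both infima are finite, lying in $[0,1]$, so this applies. It yields $|V(\rho,c)-V(\rho',c)|\le\|\rho-\rho'\|_{\mathrm{TV}}$, which is the first claim.

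\textbf{Step 3 (integration).} Apply the first claim with $\rho=q(\cdot\mid c)$ and $\rho'=q_n(\cdot\mid c)$, then invoke Lemma~\ref{lem:tv-bound} to get $|V(q,c)-V(q_n,c)|\le\tau_n$ for every $c$. Integrating against the probability measure $\pi$ gives
\[
|R(q)-R(q_n)|\le\int_{\mathcal{C}}|V(q,c)-V(q_n,c)|\,\pi(dc)\le\tau_n .
\]

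The main obstacle is not analytic but foundational. First, the centering trick in Step 1 is what makes the constant $1$, rather than $2$, come out correctly given the paper's dual definition of total variation. Second, Step 3 needs $c\mapsto V(q,c)$ and $c\mapsto V(q_n,c)$ to be $\pi$-measurable for $R$ to be defined at all. I would treat this as implicit in the definition of $R$, since $R(\rho)$ is written as an integral. Alternatively, measurability holds whenever each $q_h(y\mid\cdot)$ is measurable and $\mathcal{U}$ is countable, or more generally under standard measurable-selection conditions. Both functions are bounded in $[0,1]$, so integrability is automatic once measurability is granted.
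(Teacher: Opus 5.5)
Your proof is correct and follows the same route as the paper: bound the fixed-action expected loss by the total variation distance, pass to the infimum over $u$, and integrate the pointwise bound from Lemma~\ref{lem:tv-bound} against $\pi$. Centering $\ell$ at $\tfrac12$ is a worthwhile refinement the paper lacks. The paper plugs $f=\ell(u,\cdot)$ directly into the dual form over $|f|\le 1$, which taken literally is the full $\ell_1$ norm, whereas Lemma~\ref{lem:tv-bound} relies on $\|\cdot\|_{\mathrm{TV}}\le 1$, i.e.\ the half-$\ell_1$ normalization. Your version is what makes the constant come out as $\tau_n$ rather than $2\tau_n$, provided that in Step~3 you carry the half-$\ell_1$ quantity through; that quantity is exactly what the argument of Lemma~\ref{lem:tv-bound} bounds by $\tau_n$.
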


\begin{proof}
Fix $c$ and $u\in\mathcal{U}$. Since the function $y\mapsto \ell(u,y)$ takes values in $[0,1]$, the definition of total variation gives
\[
\left|
\sum_{y\in\mathcal{Y}}\ell(u,y)\rho(y)
-
\sum_{y\in\mathcal{Y}}\ell(u,y)\rho'(y)
\right|
\le
\|\rho-\rho'\|_{\mathrm{TV}}.
\]
Taking the infimum over $u$ on both sides yields
\[
|V(\rho,c)-V(\rho',c)|\le \|\rho-\rho'\|_{\mathrm{TV}}.
\]
Applying Lemma~\ref{lem:tv-bound} pointwise in $c$ and integrating over $\pi$ proves the second claim.
\end{proof}

\subsection{Quantitative Diminishing Returns}

Define the predictive utility of the truncated class by
\[
U_{\mathrm{pred}}(n):=-R(q_n).
\]

\begin{theorem}[Quantitative marginal-gain bound]
\label{thm:quant-dim}
For the utilities $U_{\mathrm{pred}}(n)=-R(q_n)$ defined above,
\[
|U_{\mathrm{pred}}(n+1)-U_{\mathrm{pred}}(n)|
\le
\tau_n+\tau_{n+1}
\le 2\tau_n.
\]
In particular,
\[
\lim_{n\to\infty}|U_{\mathrm{pred}}(n+1)-U_{\mathrm{pred}}(n)|=0.
\]
\end{theorem}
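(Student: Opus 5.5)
The plan is to route both utilities through the full mixture $q$, which does not depend on $n$, and use the triangle inequality. Since $U_{\mathrm{pred}}(n)=-R(q_n)$, we have
\[
|U_{\mathrm{pred}}(n+1)-U_{\mathrm{pred}}(n)|
=|R(q_{n+1})-R(q_n)|
\le |R(q_{n+1})-R(q)|+|R(q)-R(q_n)|.
\]
Lemma~\ref{lem:risk-bound}, applied at level $n$ and at level $n+1$, bounds these two terms by $\tau_{n+1}$ and $\tau_n$ respectively. This gives the first inequality $\tau_n+\tau_{n+1}$.

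The second inequality, $\tau_n+\tau_{n+1}\le 2\tau_n$, follows from monotonicity of the tail mass. We have $\mathcal{H}_n\subseteq\mathcal{H}_{n+1}$, so $\{h:K(h)>n+1\}\subseteq\{h:K(h)>n\}$. Because the weights $w_h$ are nonnegative, $\tau_{n+1}\le\tau_n$. For the limit statement, recall that $\tau_n\downarrow 0$ because $\sum_h w_h=1$; this is the tail of a convergent series. Hence $2\tau_n\to 0$, and the increments go to zero by squeezing.

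The only step needing care is checking that Lemma~\ref{lem:risk-bound} applies at both levels. Its hypotheses require $Z_n>0$ and $\tau_n>0$ so that $q_n$ and $r_n$ are defined.

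\emph{Degenerate tail.} If $\tau_n=0$ for some $n$, then $q_n=q$ directly, so the bound $\|q-q_n\|_{\mathrm{TV}}\le\tau_n=0$ holds trivially without using $r_n$.

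\emph{Empty truncated class.} I will assume $n$ is large enough that $Z_n>0$, i.e.\ $\mathcal{H}_n\neq\emptyset$, since $q_n$ is only defined then; this is implicit in the statement.

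\emph{Measurability.} Integrating $V(\cdot,c)$ against $\pi$ needs measurability in $c$. I will take this as implicit in the definition of $R$, exactly as Lemma~\ref{lem:risk-bound} already does.

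With these points noted, the argument is a two-line triangle inequality and I expect no real obstacle.
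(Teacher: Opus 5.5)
Your proposal is correct and is essentially the paper's proof. It routes $R(q_n)$ and $R(q_{n+1})$ through the full mixture $q$, applies Lemma~\ref{lem:risk-bound} at levels $n$ and $n+1$ with the triangle inequality, and then uses $\tau_{n+1}\le\tau_n\downarrow 0$; your remarks on the degenerate cases $\tau_n=0$ and $Z_n=0$ are sound refinements that the paper leaves implicit.
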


\begin{proof}
By Lemma~\ref{lem:risk-bound},
\[
|R(q_n)-R(q)|\le \tau_n,
\qquad
|R(q_{n+1})-R(q)|\le \tau_{n+1}.
\]
Therefore
\[
|R(q_{n+1})-R(q_n)|
\le
|R(q_{n+1})-R(q)|+|R(q)-R(q_n)|
\le
\tau_{n+1}+\tau_n.
\]
Since $U_{\mathrm{pred}}(n)=-R(q_n)$, the same bound holds for the utilities. Because $\tau_{n+1}\le \tau_n$, the bound by $2\tau_n$ follows immediately.
\end{proof}

Theorem~\ref{thm:quant-dim} gives a prediction-theoretic refinement of the general structural theorem proved later. The core structural result only requires nested solved-task sets and bounded utility, whereas the present section yields an explicit asymptotic bound in terms of the tail mass of the complexity prior.

\section{Utility Dynamics of Expanding Generative Systems}
\label{sec:utility-dynamics}

Sections~\ref{sec:model} and~\ref{sec:prediction} introduced the formal model of generative reasoning systems and a prediction-theoretic refinement based on complexity-bounded hypothesis classes. In this section we analyse how system utility evolves as system capacity increases. The purpose of this section is to express the growth of solved-task sets in terms of marginal gains and to prepare the structural argument leading to Theorem~\ref{thm:diminishing}. Related notions of incremental improvement and novelty sets appear naturally in algorithmic prediction and sequential decision settings \cite{Solomonoff1964a,Solomonoff1964b,Hutter2005,LiVitanyi2008}.

\subsection{Solved-Task Growth}

Recall from Section~\ref{sec:model} that the solved-task set of system $S_n$ is
\[
\mathcal{A}_n=
\{\omega\in\Omega:S_n \text{ produces a valid solution for } \omega\}.
\]

When the system expands from $S_n$ to $S_{n+1}$, additional tasks may become solvable. We define the set of newly solvable tasks as
\[
\mathcal{N}_{n+1}=
\mathcal{A}_{n+1}\setminus\mathcal{A}_n.
\]

We refer to $\mathcal{N}_{n+1}$ as the \emph{novelty set} of step $n+1$.

\subsection{Utility Decomposition}

Recall that system utility is defined as
\[
U(n)=\mu(\mathcal{A}_n).
\]

Because
\[
\mathcal{A}_{n+1}=\mathcal{A}_n\cup \mathcal{N}_{n+1},
\qquad
\mathcal{A}_n\cap \mathcal{N}_{n+1}=\varnothing,
\]
we obtain
\[
U(n+1)=U(n)+\mu(\mathcal{N}_{n+1}).
\]

Hence the marginal improvement defined in Section~\ref{sec:model} can be written as
\[
\Delta(n)=U(n+1)-U(n)=\mu(\mathcal{N}_{n+1}).
\]

This identity expresses marginal improvement entirely in terms of the probability mass of newly solvable tasks.

\subsection{Cumulative Improvement}

Iterating the above relation yields
\[
U(n)=U(1)+\sum_{k=1}^{n-1}\Delta(k).
\]

Thus the evolution of system utility can be interpreted as the cumulative contribution of successive novelty sets.

\subsection{Prediction-Theoretic Interpretation}

For prediction tasks considered in Section~\ref{sec:prediction}, system utility can be related to prediction risk. Let
\[
U_{\mathrm{pred}}(n)=-R(q_n)
\]
denote the predictive utility associated with the truncated mixture predictor $q_n$ defined in Section~\ref{sec:prediction}. The analysis there established that
\[
|U_{\mathrm{pred}}(n+1)-U_{\mathrm{pred}}(n)|\le 2\tau_n,
\]
where
\[
\tau_n=\sum_{K(h)>n}w_h
\]
is the complexity-prior tail mass. Such tail bounds are a standard consequence of universal prediction theory and algorithmic probability \cite{Solomonoff1964a,Solomonoff1964b,Hutter2005,LiVitanyi2008}. They provide a quantitative estimate of marginal improvement in the predictive setting.

\subsection{Structural Implications}

The decomposition
\[
U(n)=U(1)+\sum_{k=1}^{n-1}\Delta(k)
\]
shows that system utility grows through successive novelty sets. Accordingly, the asymptotic behaviour of the marginal gains $\Delta(n)$ determines how quickly system utility approaches its limiting value. In the next section we establish the structural conditions under which the solved-task sets form a nested sequence. Those conditions imply that the utility sequence is monotone and bounded, which in turn yields the main diminishing-returns theorem of Section~\ref{sec:diminishing}.

\section{Structural Properties of Expanding Generative Systems}
\label{sec:structure}

This section establishes the structural link between the formal model of Section~\ref{sec:model} and the asymptotic theorem proved in Section~\ref{sec:diminishing}. The key point is that, under a natural preservation assumption, expansion of model and reasoning capacity induces nested solved-task sets. This is the only structural property required for the general diminishing-marginal-improvement theorem. The prediction-theoretic analysis of Section~\ref{sec:prediction} provides an additional quantitative refinement on the prediction subspace $\Omega_{\mathrm{pred}}\subseteq\Omega$.

\subsection{Capability Preservation}

The expanding-family definition in Section~\ref{sec:model} ensures that the representational class and the internal reasoning framework both grow with the system index. To connect this formal expansion to the solved-task sets $\mathcal{A}_n$, we impose the following preservation assumption.

\textbf{Assumption 2 (Capability preservation).}
If a task $\omega\in\Omega$ can be validly solved by system $S_n$, then the same generative model and the same valid reasoning chain remain available in system $S_{n+1}$.

Assumption~2 formalizes a monotonic notion of system expansion: increasing capacity does not invalidate previously correct solutions. This assumption is structural rather than statistical; it does not require any specific architecture and is compatible with the abstract framework of Section~\ref{sec:model}.

\subsection{Nested Capability}

\begin{lemma}[Nested capability]
\label{lemma:nested}
Under Assumptions~1 and~2,
\[
\mathcal{A}_1\subseteq \mathcal{A}_2\subseteq \cdots .
\]
\end{lemma}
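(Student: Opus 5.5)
The plan is to prove the single inclusion $\mathcal{A}_n\subseteq\mathcal{A}_{n+1}$ for an arbitrary fixed $n\ge 1$. The full chain $\mathcal{A}_1\subseteq\mathcal{A}_2\subseteq\cdots$ then follows by induction on $n$, using transitivity of set inclusion; this also gives $\mathcal{A}_m\subseteq\mathcal{A}_n$ for all $m\le n$. The argument is pointwise on the task space, so no measure-theoretic input is needed beyond Assumption~1, which guarantees that every $\mathcal{A}_n$ is a subset of the same fixed space $\Omega$, so the sets can be compared.

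Fix $\omega\in\mathcal{A}_n$. By the definition of the solved-task set, $S_n$ produces a valid solution for $\omega$. This solution is obtained by some generative model $p\in\mathcal{P}_n$ together with a valid reasoning chain in $T_n$, executed by the inference procedure $I_n$. By the expanding-family definition, $\mathcal{P}_n\subseteq\mathcal{P}_{n+1}$ and $T_n\subseteq T_{n+1}$, so the model and the reasoning chain are available to $S_{n+1}$. Assumption~2 states exactly that this same model and valid reasoning chain remain available in $S_{n+1}$.

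The step that needs care is passing from ``available in $S_{n+1}$'' to ``$S_{n+1}$ produces a valid solution for $\omega$.'' Two points have to hold:
\begin{itemize}[leftmargin=*]
\item validity of a solution for $\omega$ must depend only on $\omega$ and the output, not on the system index; Assumption~1 secures this, since the task and its success criterion are fixed;
\item availability of a valid chain must count as producing a valid solution.
\end{itemize}
I will read Assumption~2 as asserting that $S_{n+1}$ can reproduce the earlier solution through $I_{n+1}$. This is its stated intent: ``increasing capacity does not invalidate previously correct solutions.'' Under that reading, $S_{n+1}$ produces a valid solution for $\omega$, so $\omega\in\mathcal{A}_{n+1}$. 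This interpretive step is the main obstacle, and I will state it explicitly rather than hide it. Since $\omega\in\mathcal{A}_n$ was arbitrary, $\mathcal{A}_n\subseteq\mathcal{A}_{n+1}$, and induction completes the proof.
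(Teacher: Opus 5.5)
Your proposal is correct and matches the paper's proof: fix $\omega\in\mathcal{A}_n$, use $\mathcal{P}_n\subseteq\mathcal{P}_{n+1}$, $T_n\subseteq T_{n+1}$ and Assumption~2 to carry the valid solution over to $S_{n+1}$, and conclude $\mathcal{A}_n\subseteq\mathcal{A}_{n+1}$ for arbitrary $n$. The step you flag, passing from the solution being ``available'' in $S_{n+1}$ to $S_{n+1}$ actually producing it, is one the paper also relies on without stating it, so spelling it out is a clarification rather than a departure.
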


\begin{proof}
Fix $n\ge 1$ and let $\omega\in\mathcal{A}_n$. By definition of the solved-task set, system $S_n$ produces a valid solution for $\omega$.

Since $S_n=(\mathcal{P}_n,T_n,I_n)$ belongs to an expanding family, we have
\[
\mathcal{P}_n\subseteq \mathcal{P}_{n+1}
\qquad\text{and}\qquad
T_n\subseteq T_{n+1}.
\]
Therefore every generative model available at level $n$ remains representable at level $n+1$, and every reasoning step valid in $T_n$ remains valid in $T_{n+1}$.

By Assumption~2, the particular valid solution of $\omega$ available in $S_n$ remains available in $S_{n+1}$. Hence $\omega\in\mathcal{A}_{n+1}$.

Since $\omega\in\mathcal{A}_n$ was arbitrary, it follows that
\[
\mathcal{A}_n\subseteq \mathcal{A}_{n+1}.
\]
As $n$ was arbitrary, the full nested chain follows.
\end{proof}

\subsection{Consequences for Utility}

Recall from Section~\ref{sec:model} that
\[
U(n)=\mu(\mathcal{A}_n)
\qquad\text{and}\qquad
\Delta(n)=U(n+1)-U(n).
\]

Lemma~\ref{lemma:nested} implies immediately that the utility sequence is monotone non-decreasing:
\[
U(1)\le U(2)\le \cdots .
\]
Because $\mu$ is a probability measure, one also has
\[
0\le U(n)\le 1
\qquad\text{for all } n\ge 1.
\]
Thus the utility sequence is bounded and monotone. Section~\ref{sec:diminishing} uses exactly these two properties to prove that the marginal improvements $\Delta(n)$ must vanish asymptotically.

\subsection{Relation to the Prediction-Theoretic Refinement}

The argument above is purely structural and applies to the full task space $\Omega$. In contrast, Section~\ref{sec:prediction} restricts attention to the prediction subspace $\Omega_{\mathrm{pred}}$ and introduces a complexity-weighted hypothesis class together with truncated predictive mixtures. In that setting the prediction utility
\[
U_{\mathrm{pred}}(n)=-R(q_n)
\]
obeys the explicit quantitative bound of Theorem~\ref{thm:quant-dim}.

Accordingly, the present section supplies the monotonic-set property needed for the general theorem, whereas Section~\ref{sec:prediction} supplies a sharper, rate-type statement for prediction tasks.

\section{Diminishing Marginal Improvement}
\label{sec:diminishing}

We now derive the central structural theorem of the paper. The result is measure-theoretic in nature and depends only on the formal framework introduced in Section~\ref{sec:model} and the nested capability property established in Section~\ref{sec:structure}. The prediction-theoretic analysis of Section~\ref{sec:prediction} then provides a quantitative refinement for the special case of prediction tasks \cite{Solomonoff1964a,Solomonoff1964b,LiVitanyi2008,Hutter2005}.

\subsection{Utility and Marginal Gain}

Recall from Section~\ref{sec:model} that the solved-task set of system $S_n$ is
\[
\mathcal{A}_n
=
\{\omega\in\Omega:S_n \text{ produces a valid solution for } \omega\},
\]
and its utility is
\[
U(n)=\mu(\mathcal{A}_n).
\]

The marginal improvement obtained by expanding system capacity from level $n$ to level $n+1$ is
\[
\Delta(n)=U(n+1)-U(n).
\]

Equivalently, if
\[
\mathcal{N}_{n+1}
=
\mathcal{A}_{n+1}\setminus \mathcal{A}_n
\]
denotes the novelty set of newly solvable tasks, then
\[
\Delta(n)=\mu(\mathcal{N}_{n+1}).
\]

\subsection{Monotonicity and Boundedness}

By Lemma~\ref{lemma:nested}, under Assumptions~1 and~2 the solved-task sets are nested:
\[
\mathcal{A}_1\subseteq \mathcal{A}_2\subseteq \cdots .
\]

Therefore the utility sequence is monotone non-decreasing:
\[
U(1)\le U(2)\le \cdots .
\]

Moreover, since $\mu$ is a probability measure on $\Omega$, we have
\[
0\le U(n)\le 1
\qquad\text{for all } n\ge 1.
\]

Hence $\{U(n)\}_{n\ge 1}$ is a bounded monotone sequence. By the monotone convergence property for real sequences, the limit exists \cite{Royden2010,Folland1999}.

\begin{lemma}
\label{lem:utility-converges}
The limit
\[
U_\infty:=\lim_{n\to\infty}U(n)
\]
exists.
\end{lemma}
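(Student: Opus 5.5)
The plan is to derive the existence of $U_\infty$ directly from the two structural facts already established: monotonicity of $U$ and its boundedness. Nothing beyond the nested capability property of Lemma~\ref{lemma:nested} and the fact that $\mu$ is a probability measure should be needed.

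\textbf{Monotonicity.} Under Assumptions~1 and~2, Lemma~\ref{lemma:nested} gives $\mathcal{A}_n\subseteq\mathcal{A}_{n+1}$ for every $n$. Since $\mu$ is a fixed measure on the fixed task space $\Omega$, monotonicity of measure then gives $U(n)=\mu(\mathcal{A}_n)\le\mu(\mathcal{A}_{n+1})=U(n+1)$. The one point I would flag explicitly is that each $\mathcal{A}_n$ must be $\mu$-measurable for $U(n)$ to be defined at all. This is implicit in the definition of utility in Section~\ref{sec:model}, and I would state it as a standing hypothesis rather than try to prove it.

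\textbf{Boundedness.} Since $\mu(\Omega)=1$ and $\mathcal{A}_n\subseteq\Omega$, we have $0\le U(n)\le 1$ for all $n$.

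\textbf{Convergence.} Set $U_\infty:=\sup_{n\ge1}U(n)$. This supremum exists in $[0,1]$ by completeness of $\mathbb{R}$. Given $\varepsilon>0$, choose $N$ with $U(N)>U_\infty-\varepsilon$. By monotonicity, every $n\ge N$ then satisfies
\[
U_\infty-\varepsilon<U(N)\le U(n)\le U_\infty,
\]
so $U(n)\to U_\infty$.

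As an alternative route, one can identify the limit through continuity of measure from below:
\[
U_\infty=\mu\Bigl(\bigcup_{n}\mathcal{A}_n\Bigr).
\]
I would mention this as a remark, since the countable union of measurable sets is measurable.

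There is no real obstacle here. The only step requiring care is the measurability caveat above; once the $\mathcal{A}_n$ are measurable, the rest is the elementary monotone convergence theorem for real sequences.
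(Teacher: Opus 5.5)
Your proposal is correct and follows the paper's argument: monotonicity from the nested solved-task sets of Lemma~\ref{lemma:nested}, boundedness in $[0,1]$ because $\mu$ is a probability measure, and convergence by the monotone convergence property for real sequences, which you prove explicitly via the supremum where the paper only cites it. Your measurability caveat and the identification $U_\infty=\mu\bigl(\bigcup_n\mathcal{A}_n\bigr)$ via continuity from below are sound additions that the paper does not include.
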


\begin{proof}
Since $\{U(n)\}_{n\ge 1}$ is monotone non-decreasing and bounded above by $1$, it converges \cite{Royden2010,Folland1999}.
\end{proof}

\subsection{Telescoping Representation}

The cumulative gain up to level $n$ admits the telescoping representation
\[
U(n)-U(1)=\sum_{k=1}^{n-1}\Delta(k).
\]

This identity follows immediately from the definition $\Delta(k)=U(k+1)-U(k)$ and will be used in the proof below.

\subsection{Main Structural Theorem}

\begin{theorem}[Diminishing marginal improvement]
\label{thm:diminishing}
Let $\{S_n\}_{n\ge 1}$ be an expanding family of generative reasoning systems satisfying Assumptions~1 and~2. Then
\[
\lim_{n\to\infty}\Delta(n)=0.
\]
\end{theorem}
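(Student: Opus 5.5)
The plan is to read the theorem off the convergence of the utility sequence. By Lemma~\ref{lemma:nested}, Assumptions~1 and~2 give nested solved-task sets $\mathcal{A}_1\subseteq\mathcal{A}_2\subseteq\cdots$. Since $\mu$ is a probability measure, $\{U(n)\}$ is then non-decreasing and bounded in $[0,1]$. Lemma~\ref{lem:utility-converges} therefore supplies the finite limit $U_\infty=\lim_{n\to\infty}U(n)$.

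Next I would write $\Delta(n)=U(n+1)-U(n)$ and pass to the limit in both terms:
\[
\lim_{n\to\infty}\Delta(n)=\lim_{n\to\infty}U(n+1)-\lim_{n\to\infty}U(n)=U_\infty-U_\infty=0.
\]
This is legitimate because both limits exist and are finite. For a more self-contained $\varepsilon$-argument, I would fix $\varepsilon>0$ and choose $N$ with $U_\infty-U(n)<\varepsilon$ for all $n\ge N$. Monotonicity gives $\Delta(n)\ge 0$, so
\[
0\le\Delta(n)\le U_\infty-U(n)<\varepsilon \quad\text{for } n\ge N.
\]

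As a cross-check I would also give the telescoping version. The identity $U(n)-U(1)=\sum_{k=1}^{n-1}\Delta(k)$ shows that the non-negative series $\sum_k\Delta(k)$ has bounded partial sums, at most $1-U(1)$. Hence it converges to $U_\infty-U(1)$, and the terms of a convergent series tend to zero.

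I do not expect a genuine obstacle here; the real content is in establishing nestedness, which Lemma~\ref{lemma:nested} already does. The one point needing care is that each $\mathcal{A}_n$ must be $\mu$-measurable, so that $U(n)$ and $\Delta(n)=\mu(\mathcal{N}_{n+1})$ are defined. I would state this as implicit in the definition of utility. I would also note that the argument does not need $\Delta(n)=\mu(\mathcal{N}_{n+1})$ at all, only that $U$ is monotone and bounded.
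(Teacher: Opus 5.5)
Your proof is correct and essentially matches the paper's. Both rest on Lemma~\ref{lemma:nested} and Lemma~\ref{lem:utility-converges}; your telescoping cross-check (a non-negative series with partial sums bounded by $U_\infty-U(1)$, whose terms must therefore tend to zero) is exactly the paper's argument, and your difference-of-limits/$\varepsilon$ version is an equivalent one-line shortcut, while your remark that each $\mathcal{A}_n$ must be $\mu$-measurable for $U(n)$ to be defined is a fair point the paper leaves implicit.
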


\begin{proof}
By Lemma~\ref{lem:utility-converges}, the limit
\[
U_\infty:=\lim_{n\to\infty}U(n)
\]
exists and is finite.

Using the telescoping representation,
\[
\sum_{k=1}^{n-1}\Delta(k)=U(n)-U(1).
\]
Taking $n\to\infty$ yields
\[
\sum_{k=1}^{\infty}\Delta(k)=U_\infty-U(1)<\infty.
\]

Since each $\Delta(k)\ge 0$, convergence of the series implies
\[
\lim_{k\to\infty}\Delta(k)=0.
\]
This proves the claim.
\end{proof}

\subsection{Equivalent Set-Theoretic Formulation}

Theorem~\ref{thm:diminishing} may also be written directly in terms of novelty sets. Since
\[
\Delta(n)=\mu(\mathcal{N}_{n+1}),
\]
Theorem~\ref{thm:diminishing} is equivalent to
\[
\lim_{n\to\infty}\mu(\mathcal{N}_{n+1})=0.
\]

Thus, although expanding system capacity may continue to enlarge the solved-task sets, the probability mass of newly solvable tasks must vanish asymptotically.

\subsection{Relation to the Prediction-Theoretic Refinement}

Theorem~\ref{thm:diminishing} is purely structural: it requires only a fixed task distribution, capability preservation, and nested solved-task sets. It does not depend on any particular learning architecture or on any special representation of the task space.

In contrast, Section~\ref{sec:prediction} considered prediction tasks equipped with a complexity-weighted hypothesis class and established an explicit quantitative bound in terms of the tail mass
\[
\tau_n=\sum_{K(h)>n} w_h.
\]

That analysis shows that, in the predictive setting,
\[
|U_{\mathrm{pred}}(n+1)-U_{\mathrm{pred}}(n)|\le 2\tau_n,
\]
which is consistent with the qualitative conclusion of Theorem~\ref{thm:diminishing} because $\tau_n\to 0$ \cite{Solomonoff1964a,Solomonoff1964b,LiVitanyi2008,Hutter2005}.

Accordingly, Theorem~\ref{thm:diminishing} provides the general asymptotic statement, while Section~\ref{sec:prediction} supplies a sharper rate-type refinement for prediction tasks.

\section{Logical Limits of Expanding Reasoning Systems}
\label{sec:logical_limits}

Sections~\ref{sec:model}--\ref{sec:diminishing} established that expanding generative reasoning systems exhibit diminishing marginal improvement in the probability mass of solvable tasks. We now analyse the logical limitations that arise when such systems perform internal reasoning about statements expressed in a formal language. Unlike the structural theorem of Section~\ref{sec:diminishing}, the present section is not used to prove diminishing marginal improvement. Rather, it shows that even if system capacity continues to expand, there remains a persistent class of logical tasks that cannot be completely resolved from within the system itself \cite{Godel1931,Rosser1936,Tarski1936,Lob1955,Boolos2002,Smorynski1977}.

\subsection{Logical Task Space}

Let $\mathcal{L}_{\mathrm{arith}}$ be a formal language extending first-order arithmetic. Define the logical task space
\[
\Omega_{\mathrm{logic}}
=
\{\varphi : \varphi \text{ is a sentence of } \mathcal{L}_{\mathrm{arith}}\}.
\]
We assume that
\[
\Omega_{\mathrm{logic}}\subseteq \Omega,
\]
so that logical reasoning tasks form a distinguished subclass of the global task space introduced in Section~\ref{sec:model}.

\subsection{Logical Solvability}

Let $T_n$ denote the reasoning component of system $S_n$, viewed as a formal theory in the language $\mathcal{L}_{\mathrm{arith}}$.

\begin{definition}[Logical task solvability]
A sentence $\varphi\in\Omega_{\mathrm{logic}}$ is said to be solved by system $S_n$ if
\[
T_n\vdash \varphi
\qquad\text{or}\qquad
T_n\vdash \neg\varphi .
\]
\end{definition}

Accordingly, the solved logical task set is
\[
\mathcal{A}_n^{\mathrm{logic}}
=
\{\varphi\in\Omega_{\mathrm{logic}}: T_n\vdash \varphi \text{ or } T_n\vdash \neg\varphi\}.
\]
By construction,
\[
\mathcal{A}_n^{\mathrm{logic}}\subseteq \mathcal{A}_n.
\]

\subsection{Assumptions on the Reasoning Theory}

We assume that each theory $T_n$ satisfies the following standard conditions:
\begin{itemize}
\item $T_n$ is recursively axiomatizable,
\item $T_n$ is consistent,
\item $T_n$ extends a sufficient fragment of arithmetic.
\end{itemize}
These assumptions are standard in the study of formal reasoning systems and incompleteness phenomena \cite{Boolos2002,Smorynski1977}.

\subsection{Rosser Incompleteness}

To avoid stronger soundness assumptions, we use Rosser's strengthening of Gödel's first incompleteness theorem.

\begin{theorem}[Rosser incompleteness theorem]
\label{thm:rosser}
Let $T$ be a consistent recursively axiomatizable theory extending a sufficient fragment of arithmetic. Then there exists a sentence $\rho_T$ such that
\[
T\nvdash \rho_T
\qquad\text{and}\qquad
T\nvdash \neg\rho_T .
\]
\end{theorem}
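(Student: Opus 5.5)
The plan is the standard proof of Rosser's theorem. It goes through the diagonal lemma, using a modified provability predicate that compares proof codes. "Sufficient fragment of arithmetic" will be read as: $T$ extends Robinson arithmetic $Q$ (or any theory representing all recursive relations). Two facts about $Q$ are needed:
\begin{itemize}
\item every recursive relation is representable, in particular decided on numerals;
\item for each numeral $\bar m$, $Q\vdash \forall x\,(x\le \bar m \leftrightarrow x=\bar 0\vee\cdots\vee x=\bar m)$, and $Q \vdash \forall x\,(x\le\bar m \vee \bar m\le x)$.
\end{itemize}

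\textbf{Step 1 (proof predicates).} Since $T$ is recursively axiomatizable, the relations $\mathrm{Prf}_T(y,x)$ ("$y$ codes a $T$-proof of the sentence with code $x$") and $\mathrm{neg}(x)=z$ are recursive. Hence they are represented in $T$ by formulas $\mathrm{Prf}_T(y,x)$ and $\mathrm{Neg}(x,z)$. Define the Rosser provability formula
\[
\mathrm{Pr}^R_T(x) :\equiv \exists y\,\bigl(\mathrm{Prf}_T(y,x)\wedge \forall z\le y\,\neg\mathrm{Prf}_T(z,\mathrm{neg}(x))\bigr).
\]

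\textbf{Step 2 (diagonalization).} By the diagonal lemma, which is provable in $Q$, there is a sentence $\rho_T$ with
\[
T\vdash \rho_T\leftrightarrow \neg \mathrm{Pr}^R_T(\ulcorner\rho_T\urcorner).
\]

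\textbf{Step 3 ($T\nvdash\rho_T$).} Suppose $m$ codes a proof of $\rho_T$. By consistency, $T\nvdash\neg\rho_T$, so no $z\le m$ codes a proof of $\neg\rho_T$. Representability then gives $T\vdash \neg\mathrm{Prf}_T(\bar z,\ulcorner\neg\rho_T\urcorner)$ for each $z\le m$. Combined with the bounded-quantifier fact, this yields $T\vdash \mathrm{Pr}^R_T(\ulcorner\rho_T\urcorner)$, witnessed by $\bar m$. Step 2 then gives $T\vdash\neg\rho_T$, contradicting consistency.

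\textbf{Step 4 ($T\nvdash\neg\rho_T$).} Suppose $k$ codes a proof of $\neg\rho_T$. Then $T\vdash \mathrm{Pr}^R_T(\ulcorner\rho_T\urcorner)$, and by consistency no $y$ codes a proof of $\rho_T$. Reasoning inside $T$, take an arbitrary witness $y$ with $\mathrm{Prf}_T(y,\ulcorner\rho_T\urcorner)$ and split by the dichotomy $y\le\bar k$ or $\bar k\le y$:
\begin{itemize}
\item If $y\le \bar k$, then $y$ equals one of $\bar 0,\dots,\bar k$. Each case is refuted, since $T\vdash\neg\mathrm{Prf}_T(\bar j,\ulcorner\rho_T\urcorner)$ for every $j$.
\item If $\bar k\le y$, then $z=\bar k$ satisfies $z\le y$ and $\mathrm{Prf}_T(\bar k,\ulcorner\neg\rho_T\urcorner)$, which violates the second conjunct.
\end{itemize}
So $T\vdash\neg\mathrm{Pr}^R_T(\ulcorner\rho_T\urcorner)$, and $T$ is inconsistent, a contradiction.

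The main obstacle is Step 4. It is exactly where Rosser's trick replaces $\omega$-consistency, and it depends on the order-theoretic facts about $\le$ holding in the base fragment. I would fix the fragment explicitly as $Q$ (citing \cite{Boolos2002,Smorynski1977} for representability and the diagonal lemma) rather than reprove those facts, and take care that the formula representing $\mathrm{neg}$ is functional in $T$.
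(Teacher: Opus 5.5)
Your proof is correct. It is the standard Rosser argument: diagonalize on the Rosser provability formula, use representability and consistency to get $T\nvdash\rho_T$, and use the dichotomy $\forall x\,(x\le\bar k\vee\bar k\le x)$ in $Q$ to get $T\nvdash\neg\rho_T$ without $\omega$-consistency.

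The paper gives no proof of its own; it states the theorem as a classical result and cites Rosser and Boolos et al. Your write-up therefore supplies the textbook argument behind that citation. It also does two useful things the paper leaves undone: it reads the unspecified ``sufficient fragment of arithmetic'' precisely as $Q$, and it shows that plain consistency is all that is used, which is the paper's stated reason for invoking Rosser rather than G\"odel.

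When you fix conventions, state explicitly how $\le$ is defined. The dichotomy is provable in $Q$ when $x\le y$ abbreviates $\exists z\,(z+x=y)$. It is not provable for the variant $\exists z\,(x+z=y)$: it already fails for $\bar k=\bar 0$ in a nonstandard model of $Q$ with non-commutative addition. Alternatively, take $\le$ as primitive with suitable axioms.
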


This theorem implies that each theory $T_n$ leaves some logical tasks undecided \cite{Rosser1936,Boolos2002}.

\begin{corollary}
\label{cor:rosser-unsolved}
For every system $S_n$, there exists a sentence
\[
\rho_n\in\Omega_{\mathrm{logic}}
\]
such that
\[
\rho_n\notin \mathcal{A}_n^{\mathrm{logic}}.
\]
Hence
\[
\rho_n\notin \mathcal{A}_n.
\]
\end{corollary}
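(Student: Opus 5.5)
The plan is to apply Theorem~\ref{thm:rosser} to each reasoning theory $T_n$ separately. Fix $n\ge 1$. By the standing assumptions of this section, $T_n$ is recursively axiomatizable, consistent, and extends a sufficient fragment of arithmetic. These are exactly the hypotheses of Theorem~\ref{thm:rosser}. That theorem therefore yields a sentence $\rho_{T_n}$ of $\mathcal{L}_{\mathrm{arith}}$ with $T_n\nvdash\rho_{T_n}$ and $T_n\nvdash\neg\rho_{T_n}$. Set $\rho_n:=\rho_{T_n}$. Since $\rho_n$ is a sentence of $\mathcal{L}_{\mathrm{arith}}$, it lies in $\Omega_{\mathrm{logic}}$ by definition of the logical task space.

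Next I unpack the definition of logical task solvability. Membership $\rho_n\in\mathcal{A}_n^{\mathrm{logic}}$ would require $T_n\vdash\rho_n$ or $T_n\vdash\neg\rho_n$. Both disjuncts fail, so $\rho_n\notin\mathcal{A}_n^{\mathrm{logic}}$. This gives the first assertion for every $n$.

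For the second assertion, $\rho_n\notin\mathcal{A}_n$, the inclusion $\mathcal{A}_n^{\mathrm{logic}}\subseteq\mathcal{A}_n$ stated in the paper goes the wrong way and cannot be used directly. What is needed is the reverse relation on logical tasks: $\mathcal{A}_n\cap\Omega_{\mathrm{logic}}\subseteq\mathcal{A}_n^{\mathrm{logic}}$. In words, a sentence counts as validly solved by $S_n$ only if $T_n$ decides it. I would make this explicit by reading the definition of ``valid solution'' for $\omega\in\Omega_{\mathrm{logic}}$ as ``produces a logically valid answer.'' That is precisely a derivation in $T_n$ of $\varphi$ or of $\neg\varphi$, which means $\mathcal{A}_n\cap\Omega_{\mathrm{logic}}=\mathcal{A}_n^{\mathrm{logic}}$. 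With that identification, $\rho_n\in\Omega_{\mathrm{logic}}$ and $\rho_n\notin\mathcal{A}_n^{\mathrm{logic}}$ together give $\rho_n\notin\mathcal{A}_n$.

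I expect this last step to be the main obstacle. It is not mathematically deep, but it depends on fixing the meaning of ``valid solution'' on logical tasks so that solvability by $S_n$ coincides with decidability in $T_n$. I would state this convention explicitly rather than rely on the reverse inclusion, which does not follow. Everything else is a direct instantiation of Theorem~\ref{thm:rosser}, applied uniformly in $n$.
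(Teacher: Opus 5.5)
Your first step matches the paper's proof. You apply Theorem~\ref{thm:rosser} to $T_n$, note that the resulting sentence lies in $\Omega_{\mathrm{logic}}$, and unfold the definition of $\mathcal{A}_n^{\mathrm{logic}}$.

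The routes diverge at the final claim. The paper closes with ``since $\mathcal{A}_n^{\mathrm{logic}}\subseteq\mathcal{A}_n$, the final claim follows.'' That is exactly the inference you rejected. The inclusion says only that deciding a sentence suffices for solving it, so together with $\rho_n\notin\mathcal{A}_n^{\mathrm{logic}}$ it says nothing about $\mathcal{A}_n$.

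Your replacement, $\mathcal{A}_n\cap\Omega_{\mathrm{logic}}\subseteq\mathcal{A}_n^{\mathrm{logic}}$, is what the conclusion actually needs. The most natural anchor for it is the paper's own Definition of logical task solvability (``$\varphi$ is said to be solved by $S_n$ if $T_n\vdash\varphi$ or $T_n\vdash\neg\varphi$''). Read it as a definitional biconditional that fixes what a valid solution of a logical task is. That is more direct than going through the phrase ``logically valid answer.''

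You are also right that this is a real modelling commitment, not a formality. Suppose a valid solution of $\varphi\in\Omega_{\mathrm{logic}}$ merely meant outputting its correct truth value in $\mathbb{N}$. Then $\rho_n\in\mathcal{A}_n$ would be possible, because $\rho_n$ has a definite truth value and nothing in the framework forces $I_n$ to answer through $T_n$-derivations.

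So your version obtains a valid proof at the cost of one explicit convention. The paper's version avoids the convention only by using the inclusion in the wrong direction. The first claim of Theorem~\ref{thm:persistence} rests on this corollary and needs the same repair.
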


\begin{proof}
Apply Theorem~\ref{thm:rosser} to $T_n$. The resulting sentence $\rho_n$ is neither provable nor refutable in $T_n$, so by definition it does not belong to $\mathcal{A}_n^{\mathrm{logic}}$. Since $\mathcal{A}_n^{\mathrm{logic}}\subseteq \mathcal{A}_n$, the final claim follows.
\end{proof}

\subsection{Tarski Undefinability}

A second obstruction concerns truth evaluation within the language itself.

\begin{theorem}[Tarski undefinability theorem]
\label{thm:tarski}
Let $\mathcal{L}_{\mathrm{arith}}$ be sufficiently expressive to formalize arithmetic. Then arithmetic truth for sentences of $\mathcal{L}_{\mathrm{arith}}$ is not definable in $\mathcal{L}_{\mathrm{arith}}$ itself.
\end{theorem}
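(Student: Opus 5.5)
The plan is the standard diagonal argument, built on the diagonal (fixed-point) lemma for $\mathcal{L}_{\mathrm{arith}}$. I will fix a Gödel numbering $\varphi\mapsto\ulcorner\varphi\urcorner$ of sentences and write $\overline{m}$ for the numeral of $m$. The hypothesis that $\mathcal{L}_{\mathrm{arith}}$ is ``sufficiently expressive to formalize arithmetic'' will be read as two facts. First, the substitution function $\mathrm{sub}(\ulcorner\psi(x)\urcorner,m)=\ulcorner\psi(\overline{m})\urcorner$ is definable in $\mathcal{L}_{\mathrm{arith}}$. Second, in the standard model $\mathbb{N}$ every such definable total function can be composed into formulas in the usual way. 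The fixed-point lemma is used only in its semantic form: for every formula $\psi(x)$ with one free variable there is a sentence $\delta$ with
\[
\mathbb{N}\models \delta\leftrightarrow \psi(\overline{\ulcorner\delta\urcorner}).
\]
To prove it, I will let $D(x,y)$ define the graph of $x\mapsto \mathrm{sub}(x,x)$. Then I put $\theta(x):=\exists y\,(D(x,y)\wedge\psi(y))$ and $\delta:=\theta(\overline{\ulcorner\theta\urcorner})$. Unwinding shows that $\delta$ holds iff $\psi$ holds of $\mathrm{sub}(\ulcorner\theta\urcorner,\ulcorner\theta\urcorner)=\ulcorner\delta\urcorner$.

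The main argument is by contradiction. Suppose some formula $\mathrm{Tr}(x)$ of $\mathcal{L}_{\mathrm{arith}}$ defines arithmetic truth, so that $\mathbb{N}\models \mathrm{Tr}(\overline{\ulcorner\varphi\urcorner})\leftrightarrow\varphi$ for every sentence $\varphi$. I apply the fixed-point lemma to $\psi(x):=\neg\mathrm{Tr}(x)$ and obtain a liar sentence $\lambda$ with $\mathbb{N}\models\lambda\leftrightarrow\neg\mathrm{Tr}(\overline{\ulcorner\lambda\urcorner})$. Combining this with the truth schema instance for $\lambda$ gives $\mathbb{N}\models\lambda\leftrightarrow\neg\lambda$, which is impossible. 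Hence no such $\mathrm{Tr}$ exists.

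The hardest step is making the expressiveness hypothesis precise enough to yield definability of $\mathrm{sub}$ and the diagonal function. I would justify this by noting that $\mathrm{sub}$ is primitive recursive, and that every recursive function is representable, hence definable in $\mathbb{N}$, in any language extending first-order arithmetic. This is the same background used for Theorem~\ref{thm:rosser}, so I would cite it as standard \cite{Boolos2002,Smorynski1977} rather than reprove it. I would also remark that the same argument relativizes to any theory $T_n$ from the earlier subsection. If $T_n$ proved all instances of the Tarski schema for some formula, then the syntactic diagonal lemma in $T_n$ would make $T_n$ inconsistent, contradicting the standing assumption that each $T_n$ is consistent.
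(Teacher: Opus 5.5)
Your proof is correct. The semantic fixed-point lemma via $\theta(x):=\exists y\,(D(x,y)\wedge\psi(y))$, applied to $\neg\mathrm{Tr}(x)$, gives $\mathbb{N}\models\lambda\leftrightarrow\neg\lambda$. That refutes exactly the ``equivalent'' formulation the paper gives in terms of a formula $\mathrm{True}(x)$.

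The paper itself gives no proof. It states the theorem as a classical result and defers to \cite{Tarski1936,Smorynski1977}, so your argument is the standard diagonal proof from those sources written out. What your version adds is a precise reading of the informal hypothesis ``sufficiently expressive'': definability, in the intended structure, of the primitive recursive diagonal function $x\mapsto\mathrm{sub}(x,x)$.

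Your argument also shows that only the semantic fixed-point lemma is needed. Hence neither consistency nor recursive axiomatizability of any $T_n$ plays a role in this theorem. Your separately stated ``second fact'' about composing definable functions into formulas is not an extra hypothesis; it is automatic once $D$ exists.

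Your closing remark about $T_n$ is a different, syntactic statement: no consistent theory extending a sufficient fragment of arithmetic proves every instance of the Tarski schema for a single formula. It too is correct, and it supplies a theory-level version of the paper's informal gloss that no such system can ``define a complete internal truth predicate.''
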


Equivalently, there is no formula $\mathrm{True}(x)$ in $\mathcal{L}_{\mathrm{arith}}$ such that for every sentence $\varphi$ of $\mathcal{L}_{\mathrm{arith}}$,
\[
\mathbb{N}\models \mathrm{True}(\ulcorner \varphi \urcorner)
\qquad\Longleftrightarrow\qquad
\mathbb{N}\models \varphi .
\]

Thus no system whose internal reasoning is expressed in $\mathcal{L}_{\mathrm{arith}}$ can define a complete internal truth predicate for all of its own arithmetic sentences \cite{Tarski1936,Smorynski1977}.

\subsection{Löb and Uniform Reflection}

We next consider self-verification tasks. Let $\mathrm{Prov}_{T_n}(x)$ denote a standard arithmetized provability predicate for $T_n$.

\begin{theorem}[Löb's theorem]
\label{thm:lob}
For any sentence $\varphi$ of the language of $T_n$,
\[
T_n\vdash \bigl(\mathrm{Prov}_{T_n}(\ulcorner \varphi \urcorner)\rightarrow \varphi\bigr)
\qquad\Longrightarrow\qquad
T_n\vdash \varphi .
\]
\end{theorem}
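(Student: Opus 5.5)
The plan is to follow the standard derivation of L\"ob's theorem from the diagonal lemma together with the Hilbert--Bernays--L\"ob derivability conditions for $\mathrm{Prov}_{T_n}$. Since $T_n$ is recursively axiomatizable and extends a sufficient fragment of arithmetic, I take it as given that $\mathrm{Prov}_{T_n}$ satisfies the following conditions, for all sentences $\alpha,\beta$:
\begin{itemize}
\item[(D1)] if $T_n\vdash\alpha$, then $T_n\vdash\mathrm{Prov}_{T_n}(\ulcorner\alpha\urcorner)$;
\item[(D2)] $T_n\vdash \mathrm{Prov}_{T_n}(\ulcorner\alpha\rightarrow\beta\urcorner)\rightarrow(\mathrm{Prov}_{T_n}(\ulcorner\alpha\urcorner)\rightarrow\mathrm{Prov}_{T_n}(\ulcorner\beta\urcorner))$;
\item[(D3)] $T_n\vdash \mathrm{Prov}_{T_n}(\ulcorner\alpha\urcorner)\rightarrow\mathrm{Prov}_{T_n}(\ulcorner\mathrm{Prov}_{T_n}(\ulcorner\alpha\urcorner)\urcorner)$.
\end{itemize}
Consistency is not needed here; the argument uses only these conditions and the diagonal lemma.

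Assume $T_n\vdash \mathrm{Prov}_{T_n}(\ulcorner\varphi\urcorner)\rightarrow\varphi$. By the diagonal lemma there is a sentence $\psi$ with
\[
T_n\vdash \psi\leftrightarrow\bigl(\mathrm{Prov}_{T_n}(\ulcorner\psi\urcorner)\rightarrow\varphi\bigr).
\]
From the left-to-right direction, (D1) gives $T_n\vdash\mathrm{Prov}_{T_n}(\ulcorner\psi\rightarrow(\mathrm{Prov}_{T_n}(\ulcorner\psi\urcorner)\rightarrow\varphi)\urcorner)$. Two applications of (D2) then yield
\[
T_n\vdash \mathrm{Prov}_{T_n}(\ulcorner\psi\urcorner)\rightarrow\bigl(\mathrm{Prov}_{T_n}(\ulcorner\mathrm{Prov}_{T_n}(\ulcorner\psi\urcorner)\urcorner)\rightarrow\mathrm{Prov}_{T_n}(\ulcorner\varphi\urcorner)\bigr).
\]
Combining this with (D3) for $\alpha=\psi$ gives $T_n\vdash\mathrm{Prov}_{T_n}(\ulcorner\psi\urcorner)\rightarrow\mathrm{Prov}_{T_n}(\ulcorner\varphi\urcorner)$. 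Chaining with the hypothesis gives $T_n\vdash\mathrm{Prov}_{T_n}(\ulcorner\psi\urcorner)\rightarrow\varphi$.

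By the right-to-left direction of the fixed point, $T_n\vdash\psi$. Then (D1) gives $T_n\vdash\mathrm{Prov}_{T_n}(\ulcorner\psi\urcorner)$, and modus ponens with the previous step yields $T_n\vdash\varphi$.

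The main obstacle is not the modal bookkeeping above, which is routine, but justifying (D3) for $T_n$. Condition (D3) is formalized $\Sigma_1$-completeness, and it requires that the "sufficient fragment of arithmetic" be at least $I\Sigma_1$ (or PA). It also requires that $\mathrm{Prov}_{T_n}$ be built from a $\Sigma_1$ representation of the axiom set of $T_n$. Robinson's $Q$ alone does not suffice. I would therefore make this requirement explicit in the standing assumptions and cite the standard verification in \cite{Boolos2002,Smorynski1977}, rather than reprove it.
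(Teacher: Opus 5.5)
Your argument is correct and is the standard diagonal-lemma-plus-derivability-conditions proof. The paper gives no proof of its own; it simply invokes the classical result via \cite{Lob1955,Smorynski1977,Boolos2002}, where this same derivation appears. Your caveat that (D3) needs more than Robinson's $Q$, so that the paper's vague ``sufficient fragment of arithmetic'' must be read as something stronger, is well taken; note only that $I\Sigma_1$ is more than is needed, since a theory extending $I\Delta_0+\exp$ with a $\Sigma_1$-represented axiom set already satisfies (D1)--(D3).
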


Löb's theorem implies that unrestricted internal reflection principles cannot be established wholesale within $T_n$ unless the corresponding reflected statements are already provable \cite{Lob1955,Smorynski1977,Boolos2002}.

In particular, tasks asking the system to certify a uniform schema of the form
\[
\forall \varphi\,\bigl(\mathrm{Prov}_{T_n}(\ulcorner \varphi \urcorner)\rightarrow \varphi\bigr)
\]
cannot in general be solved internally without passing to a substantially stronger theory. Hence unrestricted self-verification tasks remain outside the scope of $\mathcal{A}_n^{\mathrm{logic}}$.

\subsection{Persistence of Unsolved Logical Tasks}

\begin{theorem}[Persistence of unsolved logical tasks]
\label{thm:persistence}
Let $\{S_n\}_{n\ge 1}$ be an expanding family of generative reasoning systems whose reasoning components $\{T_n\}_{n\ge 1}$ satisfy the assumptions above. Then for every $n$ there exists a task
\[
\omega_n\in\Omega_{\mathrm{logic}}
\]
such that
\[
\omega_n\notin \mathcal{A}_n.
\]
Moreover, the source of unsolvability may arise from incompleteness, undefinability of arithmetic truth, or limitations on unrestricted internal reflection.
\end{theorem}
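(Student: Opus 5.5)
The main claim follows essentially directly from Corollary~\ref{cor:rosser-unsolved}. Fix $n\ge 1$. By the standing assumptions of this section, $T_n$ is consistent, recursively axiomatizable, and extends a sufficient fragment of arithmetic. Theorem~\ref{thm:rosser} therefore applies to $T_n$ and produces a Rosser sentence $\rho_{T_n}$ with $T_n\nvdash\rho_{T_n}$ and $T_n\nvdash\neg\rho_{T_n}$. We set $\omega_n:=\rho_{T_n}$. This is a sentence of $\mathcal{L}_{\mathrm{arith}}$, so $\omega_n\in\Omega_{\mathrm{logic}}$, and by the definition of logical solvability $\omega_n\notin\mathcal{A}_n^{\mathrm{logic}}$.

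To pass to $\omega_n\notin\mathcal{A}_n$ we must use the relationship between $\mathcal{A}_n^{\mathrm{logic}}$ and $\mathcal{A}_n$ on logical tasks. This is the one step requiring care. The paper records the inclusion $\mathcal{A}_n^{\mathrm{logic}}\subseteq\mathcal{A}_n$, but the implication we need runs the other way: we need $\mathcal{A}_n\cap\Omega_{\mathrm{logic}}\subseteq\mathcal{A}_n^{\mathrm{logic}}$. I would make explicit the convention implicit in the definition of logical task solvability, namely that for $\varphi\in\Omega_{\mathrm{logic}}$, ``$S_n$ produces a valid solution for $\varphi$'' means exactly that $T_n$ proves or refutes $\varphi$. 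Under that convention $\mathcal{A}_n\cap\Omega_{\mathrm{logic}}=\mathcal{A}_n^{\mathrm{logic}}$, and $\omega_n\notin\mathcal{A}_n$ follows, exactly as in the proof of Corollary~\ref{cor:rosser-unsolved}. This identification is the main obstacle, and I expect the argument to rest on stating it explicitly rather than deriving it. Note also that the expanding-family hypothesis is not needed for existence. Each $T_n$ separately satisfies the hypotheses, and the expansion $T_n\subseteq T_{n+1}$ only shows that the witness $\omega_n$ must in general change with $n$, since $T_{n+1}$ may decide $\rho_{T_n}$.

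For the ``moreover'' clause, which is interpretive, I would point to the two further obstructions as alternative sources of witnesses. First, Theorem~\ref{thm:tarski}: no formula of $\mathcal{L}_{\mathrm{arith}}$ defines arithmetic truth. Hence $\{\varphi: T_n\vdash\varphi\}$, which is definable via $\mathrm{Prov}_{T_n}$, cannot coincide with true arithmetic, so some truth-evaluation task is left unresolved by $T_n$. Second, Theorem~\ref{thm:lob}: take $\varphi$ to be $0=1$. If $T_n$ proved $\mathrm{Prov}_{T_n}(\ulcorner 0=1\urcorner)\rightarrow 0=1$, i.e.\ $\mathrm{Con}(T_n)$, then $T_n\vdash 0=1$, contradicting consistency. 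So the reflection instance $\mathrm{Con}(T_n)$ is unprovable. It is also unrefutable when $T_n$ is sound, or at least $\Sigma_1$-sound, which gives another concrete unsolved task. Because this last clause only requires the additional soundness hypothesis, I would state it as such, and keep the Rosser witness as the unconditional proof of the main claim.
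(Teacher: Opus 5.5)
Your proposal is correct and follows the paper's own route: the main claim comes from applying Rosser's theorem to each $T_n$ via Corollary~\ref{cor:rosser-unsolved}, and the \emph{moreover} clause is treated interpretively through Tarski and L\"ob. Your $\Sigma_1$-soundness caveat for the unrefutability of $\mathrm{Con}(T_n)$ is a correct sharpening that the paper omits. You are also right about the step to $\omega_n\notin\mathcal{A}_n$: it needs $\mathcal{A}_n\cap\Omega_{\mathrm{logic}}\subseteq\mathcal{A}_n^{\mathrm{logic}}$, not the inclusion $\mathcal{A}_n^{\mathrm{logic}}\subseteq\mathcal{A}_n$ that the paper's corollary cites. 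Reading the definition of logical solvability as a biconditional on $\Omega_{\mathrm{logic}}$ is exactly the repair that makes the argument valid.
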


\begin{proof}
By Corollary~\ref{cor:rosser-unsolved}, for each $n$ there exists a sentence $\rho_n\in\Omega_{\mathrm{logic}}$ such that $\rho_n\notin\mathcal{A}_n$. This already proves the first claim.

The further statement follows from the nature of the three logical obstructions. Rosser incompleteness produces undecidable sentences. Tarski's theorem shows that complete internal truth-evaluation tasks for arithmetic sentences cannot be captured in the same language. Löb's theorem shows that unrestricted internal reflection cannot be uniformly established without already proving the reflected statements. Hence logical unsolvability persists in multiple formally distinct ways.
\end{proof}

\subsection{Interpretation}

Theorem~\ref{thm:persistence} complements the diminishing marginal improvement theorem of Section~\ref{sec:diminishing}. Section~\ref{sec:diminishing} showed that the probability mass of newly solvable tasks must vanish asymptotically under the structural assumptions of the paper. The present section shows that, for reasoning tasks internal to sufficiently expressive formal systems, expansion does not eliminate all unsolved tasks. Thus the limiting behaviour of expanding generative reasoning systems is constrained not only by diminishing returns in utility growth, but also by persistent logical frontiers.

\section{Discussion and Conclusion}
\label{sec:conclusion}

This paper developed a task-space framework for analysing the long-run behaviour of expanding generative reasoning systems. Instead of focusing on particular neural architectures or training objectives, the analysis treated generative systems abstractly as mechanisms that attempt to solve tasks drawn from an underlying probability distribution over a global task space. Within this formulation, system capability is measured by the probability mass of the solved-task set associated with a given system instance.

The central structural result of the paper shows that, under mild assumptions, the marginal utility gains of an expanding sequence of systems must converge to zero. Intuitively, if successive systems solve increasingly larger subsets of tasks while the underlying task distribution remains fixed, the probability mass of newly solvable tasks necessarily decreases. Consequently, although capability may continue to improve indefinitely, the incremental gains in solved tasks diminish asymptotically.

This result is intentionally architecture-agnostic. The analysis does not depend on specific model classes such as generative adversarial networks, variational autoencoders, transformer-based language models, or diffusion-based generative models \cite{Goodfellow2014,Kingma2014,Vaswani2017,Ho2020,Rombach2022}. Instead, the theorem applies to any expanding family of generative reasoning systems whose solved-task sets evolve within a fixed probabilistic task environment. In this sense, the result provides a general mathematical perspective on capability growth that complements empirical scaling analyses of modern machine learning systems \cite{Kaplan2020,Hoffmann2022}.

The prediction-theoretic refinement presented earlier further illustrates how diminishing improvements arise when model classes expand according to complexity constraints. When predictive hypotheses are weighted according to algorithmic complexity, successive truncations of the hypothesis class yield bounds on marginal utility increments governed by the tail mass of the complexity prior \cite{Solomonoff1964a,Solomonoff1964b,LiVitanyi2008,Hutter2005}. This provides a concrete quantitative mechanism through which diminishing improvements emerge in prediction settings.

Beyond structural limits on capability growth, the paper also examined logical constraints on reasoning systems that operate over formal languages. When the reasoning component of a generative system is represented by a sufficiently expressive formal theory, classical incompleteness phenomena arise. Rosser's strengthening of Gödel's incompleteness theorem guarantees the existence of undecidable sentences in any consistent recursively axiomatizable extension of arithmetic \cite{Rosser1936}. Tarski's undefinability theorem shows that arithmetic truth cannot be defined within the same language \cite{Tarski1936}. Löb's theorem further constrains the ability of systems to establish unrestricted internal reflection principles \cite{Lob1955}. Together, these results imply that certain reasoning tasks remain unresolved within every sufficiently expressive formal system \cite{Boolos2002,Smorynski1977}.

The interaction between these two perspectives---diminishing marginal improvement and persistent logical limitations---suggests a broader view of the asymptotic behaviour of expanding generative systems. On the one hand, capability growth is constrained probabilistically: the probability mass of newly solvable tasks decreases as systems cover larger portions of the task space. On the other hand, logical frontiers remain: even very powerful reasoning systems cannot internally resolve all formally expressible reasoning tasks. These phenomena operate at different conceptual levels but jointly limit the extent to which generative systems can achieve complete task solvability.

Several directions for future work remain open. One natural extension is to study more refined models of task distributions in which the environment itself evolves as systems become more capable. Another direction concerns explicit quantitative rates of diminishing improvement under different structural assumptions on hypothesis classes or task-generation processes. Finally, it would be interesting to explore connections between the present framework and recent theoretical work on scaling laws, sample complexity, and generalization in large learning systems.

In summary, the framework developed in this paper provides a mathematically explicit lens through which to view the long-term behaviour of expanding generative reasoning systems. The results indicate that capability growth may continue while nevertheless encountering two fundamental constraints: diminishing marginal improvements in solved tasks and persistent logical limitations on internal reasoning. These principles suggest that the asymptotic frontier of generative systems is shaped simultaneously by probabilistic structure in the task environment and by deep logical properties of formal reasoning.

\bibliographystyle{plain}
\bibliography{global}

\end{document}